\documentclass[a4paper,onecolumn,11pt,unpublished]{quantumarticle}
\pdfoutput=1
\usepackage[utf8]{inputenc}
\usepackage[english]{babel}
\usepackage[T1]{fontenc}
\usepackage{amsmath}
\usepackage{hyperref}
\usepackage{graphicx}
\usepackage{tikz}
\usepackage{tikz-cd}
\usepackage{lipsum}
\usepackage{amssymb}
\usepackage{physics}
\usepackage{amsthm}
\usepackage{ytableau}
\usepackage{amsfonts}
\usepackage[numbers,sort&compress]{natbib}

\usepackage{array}
\def\id{{\rm 1\kern-.22em l}}
\usepackage{bm}

\newtheorem{teor}{Theorem}
\newtheorem{propo}{Proposition}

\begin{document}

\title{Unitary operator bases as universal averaging sets}

\author{Marcin Markiewicz}
\affiliation{Institute of Theoretical and Applied Informatics, Polish Academy of Sciences, ul. Baltycka 5, 44-100 Gliwice, Poland}
\orcid{0000-0002-8983-9077}
\author{Konrad Schlichtholz}
\affiliation{International Centre for Theory of Quantum Technologies (ICTQT),
University of Gdansk, 80-309 Gdansk, Poland}
\orcid{0000-0001-8094-7373}

\maketitle

\begin{abstract}
We provide a generalization of the idea of unitary designs to cover finite averaging over much more general operations on quantum states. Namely, we construct finite averaging sets for averaging quantum states over arbitrary reductive Lie groups, on condition that the averaging is performed uniformly over the compact component of the group. Our construction comprises probabilistic mixtures of unitary $1$-designs on specific operator subspaces. Provided construction is very general, competitive in the size of the averaging set when compared to other known constructions, and can be efficiently implemented in the quantum circuit model of computation.
\end{abstract}

\section{Introduction}

\subsection{General remarks about finite averaging}

Finite averaging of physical states is a task in which a physical state of some system is averaged over a representation of some continuous symmetry group, and the averaging process is represented as finite. This means that such a procedure can be operationally realized by sampling the representatives of symmetry operations from a \textit{finite averaging set}.  The original mathematical motivation for finite averaging in physics came from combinatorial designs, especially \textit{spherical designs} \cite{Delsarte77, Hong82}, in which the averages of polynomial functions on multidimensional spheres were realized by sampling values of the functions on finite subsets of points on these spheres. 
This idea was finally generalized to the averaging of polynomial functions defined on arbitrary manifolds equipped with finite measures \cite{Seymour84}, providing the most general version of the mean value theorem known from elementary calculus. 

In the context of quantum mechanics the idea of finite averaging needed some technical modifications related with the fact that quantum states are defined on complex manifolds; however, the core concept was the same: we average polynomial functions applied to complex vectors or matrices, and express these averages as finite sums of the respective polynomials evaluated on some finite sets of points. Historically, the first concepts of finite averaging sets in the quantum context were pure state designs \cite{Ambainis07} (known as projective $t$-designs,   inspired by the concept of designs on complex projective spaces \cite{Hoggar82}) and unitary $t$-designs \cite{Dankert05}, defined for averaging of quantum states over tensor products of unitary operations. Projective and unitary designs found several important applications within quantum information science, including efficient protocols for quantum state tomography \cite{Scott08, Bae19}, and analysis of random quantum circuits \cite{Brandao13, Hunter19}.  

In addition, several generalizations of these two concepts arose, including mixed-state designs \cite{Czartowski20}, special linear designs \cite{Markiewicz21}, or channel designs \cite{Czartowski25}. All these concepts share similar features. Firstly, they are guaranteed to exist for all dimensions; however, only particular examples are constructively known. Another feature of all of these finite averaging sets is that they perfectly preserve the structure of averaging operations. In this work, we propose an alternative approach to finite averaging of quantum states with respect to arbitrary symmetry groups. We trade the assumption of exact preservation of the structure of the symmetry operation for constructive character and competitive size of the finite averaging set. In order to introduce our concept, we need to define some preliminary ideas related to averaging and group representations.

\subsection{Averaging quantum states over symmetry groups}

A general blueprint for averaging quantum states over symmetry operations is provided by the so-called \textit{twirling map} \cite{Bartlett07}, which can be expressed as follows:
\begin{equation}
    \label{def-twirl-gen}
    \mathcal T_{\mathcal G}(\rho)=\int_{\mathcal G}\operatorname{d}\!g\,f(g)\pi_{\mathcal G}(g)\rho\,\pi_{\mathcal G}(g)^{\dagger},
\end{equation}
in which a finite-dimensional quantum system, represented by a density matrix $\rho$, is averaged over a symmetry group $\mathcal G$, which acts on quantum states via a representation $\pi_{\mathcal G}$, with respect to a measure $\operatorname{d}\!g$ and a weight function $f(g)$.
Interpretation of the above map is especially easy in the case of a compact symmetry group $\mathcal G$, since then one can always resort to unitary representations $\pi_{\mathcal G}$, which preserve normalization of the twirled state.
An emblematic example of such a situation is specified by taking $\pi_{\mathcal G}(U)=U^{\otimes t}$, namely the collective action of a unitary group on a system of $t$ particles.
In the case of non-compact symmetry groups, the physical meaning of the twirling map \eqref{def-twirl-gen} is not always well-defined, since appropriate quantum operation must be trace non-increasing. In the work \cite{Markiewicz23} a generalization of the twirling map \eqref{def-twirl-gen} to the case of arbitrary reductive Lie group has been proposed:
\begin{equation}
    \mathcal T_{\mathcal{G}}(\rho)=\int_{\mathcal K\times\mathcal A\times\mathcal K} \left(KA_{\textrm{n}}K'\right)\rho \left(KA_{\textrm{n}}K'\right)^{\dagger} \operatorname{d}\!K\operatorname{d}\!A \operatorname{d}\!K'.
    \label{mainTwirlG00}
\end{equation}
The above generalization is based on an iterated integral over the Cartan 'KAK' decomposition defined \textit{with respect to a given representation} of the group $\mathcal G$. The Cartan decomposition \cite{Knapp}, which is a generalization of the singular value decomposition (SVD) of matrices, allows one to represent any element $G$ of a given reductive Lie group $\mathcal G$ as a product of three elements: $K,A,K'$, in which $K$ and $K'$ are elements of a maximal compact subgroup $\mathcal K$ of $\mathcal G$, whereas $A$ belongs to a maximal Abelian subgroup $\mathcal A$.  

The crucial element of the above generalization is the normalization of  the operators representing the non-compact component of the group under consideration, $A_{\textrm{n}}=A/||A||$, which ensures trace non-increasing character of the map \eqref{mainTwirlG00}. Throughout this work we will utilize the definition of a generalized twirling in the form of \eqref{mainTwirlG00}, with the assumption of uniform averaging over compact components of the group $\mathcal{G}$, which basically means that $\int_{\mathcal K}  \operatorname{d}\!K$ represents the Haar integral over $\mathcal K$. In the work \cite{Markiewicz21} the above map has been used for $\mathcal G=\textrm{SL}(2,\mathbb C)$ for which $\mathcal K$ is the special unitary group $\operatorname{SU}(2)$ and $\mathcal A$ is isomorphic with $\mathbb R^+$.

In order to justify that the definition \eqref{mainTwirlG00} is sufficiently general, let us take an arbitrary, possibly reducible, \textit{matrix} representation $\pi_{\mathcal G}$ of the group 
$\mathcal G$ on $\mathbb C^n$. Let us denote by $\mathfrak c: \mathcal K \times \mathcal A \times \mathcal K \rightarrow  \mathcal G$ the mapping from the Cartesian product of the Cartan components of the group to the group itself:
\begin{equation}
    \label{def:C}
    \mathfrak c(K,A,K')=KAK',
\end{equation}
and by $\mathfrak d: \pi_{\mathcal G}(\mathcal K) \times \pi_{\mathcal G}(\mathcal A) \times \pi_{\mathcal G}(\mathcal K) \rightarrow  \pi_{\mathcal G}(\mathcal G)$ the corresponding map on the level of representations:
\begin{equation}
    \label{def:D}
    \mathfrak d(\pi_{\mathcal G}(K),\pi_{\mathcal G}(A),\pi_{\mathcal G}(K'))=\pi_{\mathcal G}(K)\pi_{\mathcal G}(A)\pi_{\mathcal G}(K').
\end{equation}
Consider now the following diagram:
$$
\begin{tikzcd}
\mathcal K \times \mathcal A \times \mathcal K \arrow{r}{\mathfrak c} \arrow[swap]{d}{\pi_{\mathcal G}} & \mathcal G \arrow{d}{\pi_{\mathcal G}} \\%
\pi_{\mathcal G}(\mathcal K)\! \times\! \pi_{\mathcal G}(\mathcal A) \!\times\! \pi_{\mathcal G}(\mathcal K) \arrow{r}{\mathfrak d}& \pi_{\mathcal G}(\mathcal G)
\end{tikzcd}
$$
The above diagram clearly commutes. To see this take an arbitrary triple $(K,A,K')\in\mathcal K\times\mathcal A\times\mathcal K$ and note that:
\begin{eqnarray}
    \label{diag:comm}
    &&({\mathfrak d}\circ\pi_{\mathcal G})(K,A,K')={\mathfrak d}(\pi_{\mathcal G}(K),\pi_{\mathcal G}(A),\pi_{\mathcal G}(K'))=\pi_{\mathcal G}(K)\pi_{\mathcal G}(A)\pi_{\mathcal G}(K'),\nonumber\\
    &&(\pi_{\mathcal G}\circ{\mathfrak c})(K,A,K')=\pi_{\mathcal G}(KAK')=\pi_{\mathcal G}(K)\pi_{\mathcal G}(A)\pi_{\mathcal G}(K'),
\end{eqnarray}
in which the last equality follows from the fact that any representation must be compatible with group multiplication.
This indicates that taking the Cartan-decomposition-based averaging is well defined for any choice of a representation $\pi_{\mathcal G}$ of $\mathcal G$.
Hence we can express the generalized twirling map \eqref{mainTwirlG00}
in a form which explicitly involves a given representation $\pi_{\mathcal G}$:
\begin{equation}
    \mathcal T_{\mathcal{G}}(\rho)=\int_{\mathcal K\times\mathcal A\times\mathcal K} \left(\pi_{\mathcal G}(K)\pi_{\mathcal G}(A_{\textrm{n}})\pi_{\mathcal G}(K')\right)\rho\left(\pi_{\mathcal G}(K)\pi_{\mathcal G}(A_{\textrm{n}})\pi_{\mathcal G}(K')\right)^{\dagger} \operatorname{d}\!K\operatorname{d}\!A \operatorname{d}\!K'.
    \label{mainTwirlG0}
\end{equation}

\subsection{Finite averaging}

Standard approach to formulating the finite averaging procedure over symmetry operations assumes that the twirling integral is expressed as a \textit{finite sum} with respect to some subset $\mathcal X$ of group elements called a \textit{finite averaging set}:
\begin{equation}
    \label{def-fin-group-average00}
    \mathcal T_{\mathcal G}(\rho)=\int_{\mathcal G}\operatorname{d}\!g\,f(g)\pi_{\mathcal G}(g)\rho\,\pi_{\mathcal G}(g)^{\dagger}=\sum_{g_i\in \mathcal X_{\mathcal G}} f(g_i)\pi_{\mathcal G}(g_i)\rho\,\pi_{\mathcal G}(g_i)^{\dagger}.
\end{equation}
An emblematic example of such finite averaging sets is the concept of unitary $t$-designs \cite{Dankert05, Gross07, Scott08, Dankert09, Roy09, Webb16, Hunter19}, for which we take $\pi_{\mathcal G}=U^{\otimes t}$ and $f(g)=1$. Such defined finite averaging sets always exist, see e.g. \cite{Markiewicz21}, following the seminal work \cite{Seymour84}, however in general they are very difficult to construct and only particular cases are already known even for small dimensions \cite{Gross07, Dankert09, Webb16}. 

In this work, we redefine the notion of finite averaging, by partially decoupling it from the group structure itself, and propose the following construction. The starting point is to introduce decomposition into irreducible representations, labeled by an integer index $k\in \mathcal I$:
\begin{equation}
    \label{def-fin-group-average0}
    \mathcal T_{\mathcal G}(\rho)=\int_{\mathcal G}\operatorname{d}\!g\,f(g)\pi_{\mathcal G}(g)\rho\,\pi_{\mathcal G}(g)^{\dagger}=\sum_{k\in\mathcal{I}} \int_{\mathcal G}\operatorname{d}\!g\,f(g)\pi_{\mathcal G}^{(k)}(g)\rho\,\pi_{\mathcal G}^{(k)}(g)^{\dagger},
\end{equation}
in which irreducible representations $\pi_{\mathcal G}^{(k)}$ of $\mathcal G$ are assumed to be trivially extended onto the entire representation space in order to avoid direct-sum notation of the form $\bigoplus_{k\in\mathcal I}\pi_{\mathcal G}^{(k)}$. The diagonal character of the above way of averaging (the fact that the $k$-th subspace on the "ket" side meets the $k$-th subspace on the "bra" side) follows from Schur's Lemma, which will be discussed thoroughly later in this work. Then by \textit{universal finite unitary averaging} we understand the following representation of the integral:
\begin{equation}
    \label{def-fin-group-average}
    \mathcal T_{\mathcal G}(\rho)=\int_{\mathcal G}\operatorname{d}\!g\,f(g)\pi_{\mathcal G}(g)\rho\,\pi_{\mathcal G}(g)^{\dagger}=\sum_{k\in\mathcal{I}}p_k \frac{1}{\left(D^k_G\right)^2} \sum_{l=1}^{\left(D^k_G\right)^2} \tilde\gamma_l^{(k)}\rho\,\tilde\gamma_l^{(k)\dagger}=\sum_{k\in\mathcal{I}}p_k\mathcal{T}_{\mathcal U}^k(\rho),
\end{equation}
in which the unitary operators $\tilde\gamma_l^{(k)}$ act on  $D^k_G$-dimensional subspaces irreducible under the action of the group $\mathcal G$ via representation $\pi_{\mathcal G}$, and are trivially extended on the entire representation space. The channel:
\begin{equation}
    \label{def-fin-group-average-mixed-unitary}
   \mathcal{T}_{\mathcal U}^k(\rho)=\frac{1}{\left(D^k_G\right)^2} \sum_{l=1}^{\left(D^k_G\right)^2} \tilde\gamma_l^{(k)}\rho\,\tilde\gamma_l^{(k)\dagger},
\end{equation}
is a uniformly mixed unitary channel on $k$-th invariant subspace implementing unitary $1$-design therein. In the case of averaging over a compact symmetry group we have $p_k=1$ for each $k$, whereas in the case of a noncompact one, $p_k$' s are generally less than one, and moreover $\sum_{k\in\mathcal I}p_k<1$, which indicates that averaging over such a symmetry group leads to a trace-decreasing operation. Physically, this means that the operation involves stochastic filtering of quantum states \cite{Avron09}, and some runs of the experiment have to be rejected; see, e.g. the case of averaging over Stochastic Local Operations and Classical Communication (SLOCC) \cite{Markiewicz21, Markiewicz23}.

In the following section, we will show that \textit{universal finite unitary averaging} exists for any reductive Lie group $\mathcal G$ acting via a regular finite-dimensional representation $\pi_{\mathcal G}$, whenever the averaging procedure is defined via Cartan decomposition \eqref{mainTwirlG0}, and the averaging over the compact components is performed in a uniform way. Our proof is constructive: we show that universal finite unitary averaging can be implemented using unitary operator bases \cite{BZ17}. The proposed finite averaging procedure allows to simulate a very broad class of averaging procedures for quantum states by simply sampling from a finite set of unitary operations. 

Note that the construction does not strictly follow the idea of finite group averaging in the original form \eqref{def-fin-group-average00}, as it does not reduce to sampling a finite number of elements of the original group and expressing them in the original representation.
On the other hand, it is constructive and universal and can be utilized as a means to analyze random quantum circuits \cite{Dankert05, Hunter19} and randomized quantum protocols \cite{Brandao13}.  

Despite being different than the original $t$-designs, the proposed procedure is not entirely decoupled from the group $\mathcal G$ structure, since this structure affects two aspects: (i) dimensions $D^k_G$ of irreducible subspaces, which determine the range of the index $l$ in formula \eqref{def-fin-group-average}, and (ii) explicit form of weights $p_k$, which are determined by integrating over the non-compact group component.

\section{Universal averaging with unitary operator bases}

\subsection{Generalized duality for group representations and Schur operator bases}

One of the main analytical tools in the theory of averaging over symmetry groups is  Schur-Weyl duality \cite{Goodman09, Dipper08, Doty09, Marvian14, Gross21}. However, in this work, we will utilize a more general approach to duality of group representations in order to provide the most general results, with Schur-Weyl duality being a specific example. 

Let us set the scene, and take some reductive subgroup $\mathcal G$ of a general linear group $\textrm{GL}(d, \mathbb C)$, which acts on a vector space $\mathbb C^{n}$ via a finite-dimensional regular representation $\pi_{\mathcal G}$. This representation is in general reducible, and natural question arises, how one can relate $\pi_{\mathcal G}$ with all irreducible representations $\{\xi_{\mathcal G}^{(k)}\}$ of the group $\mathcal G$, when acting on the vector space $\mathbb C^{n}$. The answer is specified by the \textit{general duality theorem} (see \cite{Goodman09}, sec. 4.2):

\begin{propo}[Duality theorem for group representations]
Let us take a reductive subgroup $\mathcal G$ of a general linear group $\operatorname{GL}(d, \mathbb C)$, its regular finite-dimensional representation 
$\pi_{\mathcal{G}}$  acting on $\mathbb C^n$, and the set of (equivalence classes of) its irreducible representations $\{\xi_{\mathcal G}^{(k)}\}$. Now let us introduce the centralizer (commutant) $\pi_{\mathcal C}$ of $\pi_{\mathcal{G}}$ in the algebra of endomorphisms of the vector space $\mathbb C^n$ and the corresponding set of (equivalence classes of) its irreducible representations $\{\xi_{\mathcal C}^{(k)}\}$. Then the joint action of $\pi_{\mathcal G}\times\pi_{\mathcal C}$ on $C^{n}$ decomposes uniquely into direct sum of products of irreducible representations $\xi_{\mathcal G}^{(k)}$, and $\xi_{\mathcal C}^{(k)}$:
\begin{equation}
\label{eq:isodecomp}
    \pi_{\mathcal G}\times\pi_{\mathcal C}=\bigoplus_{k\in\mathcal I} \xi_{\mathcal G}^{(k)}\otimes \xi_{\mathcal C}^{(k)}.
\end{equation}
\end{propo}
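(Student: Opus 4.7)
The plan is to proceed in three stages: establish complete reducibility of $\pi_{\mathcal{G}}$, identify the commutant via Schur's lemma, and recognize the resulting structure as the claimed bilinear decomposition.

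First, I would invoke the fact that $\mathcal{G}$ being reductive together with $\pi_{\mathcal{G}}$ finite-dimensional implies complete reducibility (for a complex reductive algebraic group one restricts to a maximal compact subgroup and averages with the Haar measure, i.e.\ Weyl's unitarian trick). This yields an isotypic decomposition
\begin{equation*}
\mathbb{C}^n \;\cong\; \bigoplus_{k\in\mathcal{I}} V_k \otimes M_k,
\end{equation*}
where $V_k$ carries the irreducible representation $\xi_{\mathcal{G}}^{(k)}$, the index $k$ runs over the isomorphism classes that actually appear, and $M_k$ is the multiplicity space with $\dim M_k$ equal to the number of copies of $\xi_{\mathcal{G}}^{(k)}$ in $\pi_{\mathcal{G}}$. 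On each summand the group acts as $\xi_{\mathcal{G}}^{(k)}(g)\otimes \operatorname{id}_{M_k}$.

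Second, I would compute the commutant. By Schur's lemma, every $\mathcal{G}$-intertwiner between two nonisomorphic irreducibles is zero, and every self-intertwiner of an irreducible is a scalar. Hence any operator $T$ commuting with all $\pi_{\mathcal{G}}(g)$ must preserve the isotypic decomposition and act on the $k$-th component as $\operatorname{id}_{V_k}\otimes T_k$ for some $T_k\in\operatorname{End}(M_k)$. Conversely every such operator lies in the commutant, so
\begin{equation*}
\pi_{\mathcal{C}} \;=\; \bigoplus_{k\in\mathcal{I}} \operatorname{id}_{V_k}\otimes \operatorname{End}(M_k).
\end{equation*}
Since each $\operatorname{End}(M_k)$ is a full matrix algebra, it is simple with unique (up to isomorphism) irreducible module $M_k$ itself, and the blocks for different $k$ annihilate each other. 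Consequently the irreducible representations of $\pi_{\mathcal{C}}$ are in canonical bijection with $\mathcal{I}$, and one is entitled to set $\xi_{\mathcal{C}}^{(k)} := M_k$.

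Third, substituting these identifications into the isotypic decomposition shows that, as a $\pi_{\mathcal{G}}\times\pi_{\mathcal{C}}$-module, $\mathbb{C}^n$ equals $\bigoplus_{k\in\mathcal{I}}\xi_{\mathcal{G}}^{(k)}\otimes \xi_{\mathcal{C}}^{(k)}$, with the same index set on both factors (this is the content of the duality). Uniqueness of the decomposition is inherited from uniqueness of the isotypic decomposition of $\pi_{\mathcal{G}}$. I expect the main obstacle to be the first step: everything downstream is Schur's lemma plus bookkeeping on the block structure of a semisimple algebra, but complete reducibility must be secured before invoking it, and this is precisely where the reductivity hypothesis on $\mathcal{G}$ (and the regularity/finite-dimensionality of $\pi_{\mathcal{G}}$) does the real work. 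A secondary subtlety is conceptual: one must check that no irreducible of the commutant is ``missed'' — but this is automatic because the commutant \emph{is} defined as the full centralizer, and the explicit formula above exhibits all of its simple components.
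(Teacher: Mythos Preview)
Your outline is correct and is essentially the standard argument for the double-commutant/duality theorem: isotypic decomposition from complete reducibility, identification of the commutant as a direct sum of full matrix algebras via Schur's lemma, and the resulting multiplicity-free $\pi_{\mathcal G}\times\pi_{\mathcal C}$ decomposition. There is nothing to compare it against in the paper itself, however: the paper does not prove this proposition but states it as a known result, citing Goodman--Wallach \cite{Goodman09}, Section~4.2, and then immediately moves on to introducing the Schur basis notation that the duality furnishes. Your sketch is in fact the skeleton of the proof one finds in that reference, so in effect you have reconstructed what the paper outsources.
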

Note that each pair of irreducible representations appears only once in the above decomposition, which is the clue of the duality for group representations (the decomposition is \textit{multiplicity free}).
In order to make clear what the above \textit{virtual} tensor product means we introduce operator notation following \cite{Markiewicz23}.
Indeed, let us introduce the following basis on $\mathbb C^n$, called Schur basis, defined for each $k$-th subspace by the table:
\begin{eqnarray}
    \label{SchurBasis0}
     G^k_1:\,\,&&\ket{k,1,1}\,\,\,\,\,\,\,\,\,\ket{k,2,1}\,\,\,\,\,\,\ldots\,\,\,\,\,\,\ket{k,D^k_G,1} \nonumber\\
     G^k_2:\,\,&&\ket{k,1,2}\,\,\,\,\,\,\,\,\,\ket{k,2,2}\,\,\,\,\,\,\ldots\,\,\,\,\,\,\ket{k,D^k_G,2} \nonumber\\
     && \ldots\ldots\ldots\ldots\ldots\ldots\ldots\ldots\ldots\ldots\ldots\ldots \nonumber\\
       G^k_{D^k_C}:\,\,&&\underbrace{\ket{k,1,D^k_C}}_{C^k_1}\,\,\underbrace{\ket{k,2,D^k_C}}_{C^k_2}\,\,\,\ldots\,\,\,\underbrace{\ket{k,D^k_G,D^k_C}}_{C^k_{D^k_G}}.\nonumber\\
    \end{eqnarray}
The above table shows the interplay between the action of $\pi_{\mathcal G}$ and $\pi_{\mathcal C}$ on the irreducible subspaces for the joint action of $\pi_{\mathcal G}\times\pi_{\mathcal C}$. Namely, subspaces $G^k_{\lambda}$, spanned by vectors  $\{\ket{k,m,\lambda}\}_{m=1}^{D^k_G}$ occurring in each row, are invariant and irreducible under the action of $\pi_{\mathcal G}$, and correspond to equivalent irreducible representations $\xi^{(k)}_{\mathcal G}$ of dimension $D^k_G$.
On the other hand the subspaces $C^k_m$ spanned by vectors $\{\ket{k,m,\lambda}\}_{\lambda=1}^{D^k_C}$ appearing in each column are invariant and irreducible under the action of  $\pi_{\mathcal C}$, and correspond to equivalent $D^k_C$-dimensional irreducible representations $\xi^{(k)}_{\mathcal C}$. Now the table can be seen as representing tensor product of \textit{virtual vectors} $\ket{m_k}\otimes\ket{\lambda_{k}}$, which separately span virtual subspaces of dimensions $D^k_G$ and $D^k_C$ respectively. Although the picture of virtual subspaces is fruitful in some applications, in the context of the theory of group averaging it is rather more convenient to introduce  the following outer-product-based operators, as shown in recent works \cite{Markiewicz23, Schlichtholz24}:
    \begin{equation}
    \label{FullPiBasis}
    \hat\Pi_{kk'}^{m_1\lambda_1 m_2\lambda_2}=\ket{k,m_1,\lambda_1}\bra{k',m_2,\lambda_2}.
\end{equation}
With the help of the above operators we can define two \textit{Schur operator bases}:
\begin{eqnarray}
\label{PiBasis0}
\hat\Lambda^{\lambda_1\lambda_2}_k&=&\sum_{m=1}^{D^k_G}\hat\Pi_{kk}^{m\lambda_1 m\lambda_2},\nonumber\\
\hat\Pi^{m_1 m_2}_k&=&\sum_{\lambda=1}^{D^k_C}\hat\Pi_{kk}^{m_1\lambda m_2\lambda}.
\end{eqnarray}
They span irreducible representations:
\begin{eqnarray}
    \pi_{\mathcal G}^{(k)}&=&\xi_{\mathcal G}^{(k)}\otimes\id,\nonumber\\
    \pi_{\mathcal C}^{(k)}&=&\id\otimes\xi_{\mathcal C}^{(k)},
\end{eqnarray}
which are just the representations $\xi^{(k)}$ appearing in \eqref{eq:isodecomp} including their multiplicities when acting on $\mathbb C^n$. Indeed, we have \cite{Schlichtholz24}:
\begin{eqnarray}
    \label{Ut1}
    \pi_{\mathcal G}^{(k)}&=&\frac{1}{D^k_C}\sum_{m_1,m_2=1}^{D^k_G}\operatorname{Tr}\left(\pi_{\mathcal G}^{(k)}\hat\Pi^{m_1 m_2\dagger}_k\right)\hat\Pi^{m_1 m_2}_k\nonumber\\
   \pi_{\mathcal C}^{(k)}&=&\frac{1}{D^k_G}\sum_{\lambda_1,\lambda_2=1}^{D^k_C}\operatorname{Tr}\left(\pi_{\mathcal C}^{(k)}\hat\Lambda^{\lambda_1 \lambda_2\dagger}_k\right)\hat\Lambda^{\lambda_1 \lambda_2}_k.
\end{eqnarray}

The reason why we have to work with representations $\pi^{(k)}$ rather than $\xi^{(k)}$ is that these are the $\pi^{(k)}$ operators which can be operationally realized on the space $\mathbb C^n$, on condition that one has operational access to the Schur basis $\{\ket{k,m,\lambda}\}$. Since $\pi_{\mathcal C}$ is a commutant of $\pi_{\mathcal G}$, the Schur operator bases \eqref{PiBasis0} commute themselves:
\begin{equation}
    \label{PiCommute}
[\hat\Lambda^{\lambda_1\lambda_2}_k, \hat\Pi^{m_1 m_2}_k]=0.
\end{equation}

The most well known example of the above duality is the Schur-Weyl duality, defined by the fact that the representation $\pi_{\mathcal G}$ expresses a collective action of the group $\mathcal{G}$ on the tensor product space $ (\mathbb C^d)^{\otimes t}$:
\begin{equation}
    \label{colAction}
    \pi_{\mathcal G}(L)=L^{\otimes t},
\end{equation}
in which $L$ is an element of a defining $d$-dimensional representation of a group $\mathcal G$. Schur-Weyl duality is stated most often for group $\textrm{GL}(d, \mathbb C)$ and its subgroups $\textrm{SL}(d, \mathbb C)$
and $\textrm{U}(d, \mathbb C)$. In all those cases, the commutant $\pi_C$ turns out to be the symmetric group of a set of $t$ elements.

However, a less typical example with applications in quantum information is Schur-Weyl duality for the tensor power of a Clifford group \cite{Gross21}. Indeed, let us take $n$-qubit system with pure state space $(\mathbb{C}^2)^{\otimes n}$. The Pauli group is defined as a finite group generated by local application of Pauli unitaries $\{\sigma_1^{i_1}\otimes\ldots\otimes\sigma_n^{i_n}\}$. The Clifford group $Cl_n$ is defined as the normalizer of the Pauli group in the unitary group $\textrm{U}(2n)$, namely it is a set of unitaries $U_{Cl}$ from $\textrm{U}(2n)$, which map Pauli unitaries into Pauli unitaries. $Cl_n$, similarly to  the Pauli group, is a finite group. Now take $t$ copies of such an $n$-qubit system, described by the state space $((\mathbb{C}^2)^{\otimes n})^{\otimes t}\equiv (\mathbb C^2)^{\otimes nt}$. In analogy to the case of collective action of the unitary group on single quantum systems we can consider collective action of Clifford unitaries $U_{Cl}^{\otimes t}$ on $t$ copies of $n$-qubit systems.  The question arises what is the commutant of $\{U_{Cl}^{\otimes t}\}$ with respect to endomorphisms on $(\mathbb C^2)^{\otimes nt}$. In the work \cite{Gross21} such a commutant is explicitly constructed, which constructively defines Schur-Weyl duality for the collective action of Clifford unitaries. It is natural that the representations of permutations of $t$-element set on the space $(\mathbb C^2)^{\otimes nt}$ are present in the commutant, as they commute with all collective unitaries acting on this space. However, Clifford unitaries are a finite subgroup of all unitaries, hence the commutant is bigger and contains additional transformations, precisely specified in \cite{Gross21}.

Another very important example is the so-called mixed Schur-Weyl duality for the unitary groups \cite{MST_Nguyen23, Studzinski25}, in the case of which we consider the following representation of the unitary group: 
\begin{equation}
     \pi_{\operatorname{U}(d)}(U)=U^{\otimes t_1}\otimes\bar U^{\otimes t_2},
\end{equation}
on the tensor space $(\mathbb C^d)^{\otimes (t_1+t_2)}$, in which bar denotes complex conjugate of the matrix. The commutant of the matrix algebra generated by such representation is the so-called \textit{walled Brauer algebra} \cite{Studzinski25}.
Mixed Schur-Weyl duality has important applications, for example in the protocol of port-based teleportation \cite{Grinko23} and quantum error correction \cite{PRXQuantum.3.020314}.

Whenever one wants to utilize in practice the analytical methods related with duality of representations, one needs to implement the transition matrix from the computational basis in the representation space into the Schur basis \eqref{SchurBasis0}. Such transition matrix is called the \textit{Quantum Schur Transform} (QST) \cite{HarrowPHD}. 
So far there exist several efficient quantum circuit implementations of QST's for ordinary Schur-Weyl duality for unitary groups \cite{SWC_Bacon06, SWC_Kirby18, SWC_Krovi19} and for mixed Schur-Weyl duality for unitary groups \cite{MST_Nguyen23, Grinko23}.

\subsection{Main result -- universal finite averaging theorem}

Let us now formulate the main theorem of this work:
\begin{teor}
    Let $\mathcal G$ be any reductive Lie group, acting via a regular finite-dimensional representation $\pi_{\mathcal G}$, and let $\mathcal K$ and $\mathcal A$ be its Cartan components, maximally compact and maximally Abelian respectively. Then the Cartan-based average of arbitrary quantum state $\rho$, defined as:
    \begin{equation}
    \mathcal T_{\mathcal{G}}(\rho)=\int_{\mathcal K\times\mathcal A\times\mathcal K} \left(\pi_{\mathcal G}(K)\pi_{\mathcal G}(A_{\operatorname{n}})\pi_{\mathcal G}(K')\right)\rho\left(\pi_{\mathcal G}(K)\pi_{\mathcal G}(A_{\operatorname{n}})\pi_{\mathcal G}(K')\right)^{\dagger} \operatorname{d}\!K\operatorname{d}\!A \operatorname{d}\!K',
    \label{mainTwirlTeor}
\end{equation}
    where $K,K'\in{\mathcal K}$, $A_{\mathrm{n}}=A/||A||, \,A\in{\mathcal A}$, 
    $\operatorname{d}\!K$ and $\operatorname{d}\!K'$ represent Haar-invariant volume elements of $\mathcal K$ and $\mathcal K'$, whereas $\operatorname{d}\!A$ represents arbitrary normalized volume element of $\mathcal A$,
    can be always represented as a finite sum of the form:
    \begin{equation}
    \label{teor-fin-group-average}
    \mathcal T_{\mathcal G}(\rho)=\sum_{k\in\mathcal I} p_k \mathcal{T}_{\mathcal U}^k(\rho),
\end{equation}
in which:
\begin{equation}
    \label{teor-fin-group-average_mixed}
   \mathcal{T}_{\mathcal U}^k(\rho)=\frac{1}{\left(D^k_G\right)^2} \sum_{l=1}^{\left(D^k_G\right)^2} \tilde\gamma_l^{(k)}\rho\,\tilde\gamma_l^{(k)\dagger},
\end{equation}
is a $1$-design on $k$-th invariant subspace.
The operators $\{\tilde \gamma^{(k)}_l\}_{l=1}^{\left(D^k_G\right)^2}$ form a unitary operator basis acting on subspaces irreducible with respect to the action of $\pi_{\mathcal G}$:
\begin{equation}
    \label{Gt0}
\tilde\gamma^{(k)}_l=\sum_{m_1,m_2=1}^{D^k_G}{\left(\gamma^{(k)}_l\right)}_{m_1m_2}\hat\Pi^{m_1 m_2}_k,
\end{equation}
in which $\gamma^{(k)}_l$ denotes $l$-th unitary operator basis matrix acting on space $\mathbb C^{D^k_G}$, the operators $\hat\Pi^{m_1 m_2}_k$ represent Schur operator basis \eqref{PiBasis0} on these subspaces and $\tilde\gamma^{(k)}_l$ denotes its counterpart trivially extended as the operator on the entire representation space, on which $\rho$ is defined. The probabilities $p_k$ read:
$$p_k=\frac{\beta_k}{D^k},$$
in which the coefficient $\beta_k$ is specified by the following integral:
\begin{eqnarray}
\label{betafinG}
\beta_k=\Tr\left(\left[\int_{\mathcal A} \pi_{\mathcal G}(A_{\operatorname{n}})\pi_{\mathcal G}(A_{\operatorname{n}})^{\dagger}\operatorname{d}\!A\right]\hat\Pi_k^\dagger\right),
\end{eqnarray}
in which $\hat \Pi_k$ is the projector onto the entire $k$-th invariant subspace \eqref{SchurBasis0} and $D^k=D^k_CD^k_G$ denotes its dimension. In the case of $\mathcal G$ being compact, we have $p_k=1$ for each $k$.
\end{teor}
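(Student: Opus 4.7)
The plan is to expand every $\pi_{\mathcal{G}}(g)$ appearing in the integrand of \eqref{mainTwirlTeor} in the Schur operator basis $\{\hat\Pi^{m_1 m_2}_k\}$ of \eqref{PiBasis0}, convert the two $\mathcal{K}$-Haar integrals into Peter--Weyl (Schur) orthogonality integrals for the matrix coefficients of the irreducible representations $\xi^{(k)}_{\mathcal{G}}$, and finally recognise the remaining $\rho$-dependent sum as the mixed-unitary $1$-design $\mathcal{T}^k_{\mathcal{U}}(\rho)$ scaled by an $\mathcal{A}$-integral that produces $p_k$. Concretely, using the tensor structure $\pi^{(k)}_{\mathcal{G}}(g)=\xi^{(k)}_{\mathcal{G}}(g)\otimes\id_{C^k}$ on the $k$-th isotypic subspace, I would first rewrite
$$\pi_{\mathcal{G}}(g) = \sum_{k\in\mathcal{I}}\sum_{m_1,m_2=1}^{D^k_G} \xi^{(k)}_{\mathcal{G}}(g)_{m_1 m_2}\,\hat\Pi^{m_1 m_2}_k,$$
substitute $g=KA_{\operatorname{n}}K'$, and factor each matrix coefficient via $\xi^{(k)}(KA_{\operatorname{n}}K')=\xi^{(k)}(K)\xi^{(k)}(A_{\operatorname{n}})\xi^{(k)}(K')$.

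The two independent Haar integrals over $\mathcal{K}$ are then dispatched by Schur orthogonality,
$$\int_{\mathcal{K}}\operatorname{d}\!K\,\xi^{(k_1)}(K)_{m_1 a_1}\overline{\xi^{(k_2)}(K)_{m_3 b_1}} = \frac{\delta_{k_1 k_2}\,\delta_{m_1 m_3}\,\delta_{a_1 b_1}}{D^{k_1}_G},$$
together with the analogous identity for the $K'$ integral. The $k_1\neq k_2$ cross terms vanish, producing the block-diagonality promised after \eqref{def-fin-group-average0}, and each surviving $k_1=k_2=k$ contribution carries a prefactor $1/(D^k_G)^2$. Collapsing the $a$-indices leaves
$$\mathcal{T}_{\mathcal{G}}(\rho) = \sum_k\frac{1}{(D^k_G)^2}\left(\int_{\mathcal{A}}\operatorname{d}\!A\,\sum_{a_1,a_2}\abs{\xi^{(k)}(A_{\operatorname{n}})_{a_1 a_2}}^2\right)\sum_{m_1,m_2}\hat\Pi^{m_1 m_2}_k\,\rho\,\hat\Pi^{m_1 m_2\dagger}_k.$$
The bracketed Hilbert--Schmidt norm equals $\Tr(\pi_{\mathcal{G}}(A_{\operatorname{n}})\pi_{\mathcal{G}}(A_{\operatorname{n}})^\dagger\hat\Pi_k)/D^k_C$ by the same tensor structure, so the $\mathcal{A}$-integral evaluates to $\beta_k/D^k_C$, with $\beta_k$ exactly as in \eqref{betafinG}. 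A short direct calculation using $\hat\Pi^{m_1 m_2}_k=\ket{m_1}\!\bra{m_2}_{G^k}\otimes\id_{C^k}$ shows
$$\sum_{m_1,m_2}\hat\Pi^{m_1 m_2}_k\,\rho\,\hat\Pi^{m_1 m_2\dagger}_k = \id_{G^k}\otimes\Tr_{G^k}(\rho_{kk}) = D^k_G\,\mathcal{T}^k_{\mathcal{U}}(\rho),$$
the last equality being the defining $1$-design identity $\tfrac{1}{D^2}\sum_l\gamma_l X\gamma_l^\dagger=\Tr(X)\,\id/D$ for any unitary operator basis. Assembling the three factors produces the coefficient $\beta_k/(D^k_G D^k_C)=\beta_k/D^k=p_k$, and the compact case is recovered because then $\mathcal{A}$ is trivial, $\pi_{\mathcal{G}}(A_{\operatorname{n}})=\id$, and $\beta_k=D^k$.

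The main technical point is the Schur orthogonality step: it requires that each $\xi^{(k)}_{\mathcal{G}}|_{\mathcal{K}}$ is itself an irreducible $\mathcal{K}$-representation and that distinct $\mathcal{G}$-types yield distinct $\mathcal{K}$-types. For compact $\mathcal{G}$ this is tautological, and for complex reductive $\mathcal{G}$ it is supplied by Weyl's unitary trick on the rational/holomorphic finite-dimensional representations; tying the theorem's ``regular'' hypothesis cleanly to this correspondence is the delicate structural point, after which the rest of the argument is bookkeeping of traces and dimensions.
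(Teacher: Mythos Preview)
Your argument is correct and arrives at the same conclusion, but the route is genuinely different from the paper's. The paper does not compute the two $\mathcal{K}$-integrals directly via Peter--Weyl orthogonality of matrix coefficients. Instead it first invokes the factorisation result of \cite{Markiewicz23},
\[
\mathcal{T}_{\mathcal{G}}(\rho)=\mathcal{T}_{\mathcal{K}}(\rho)\sum_{k}\frac{\beta_k}{D^k}\hat\Pi_k,
\]
which packages the entire $\mathcal{A}$-dependence into the scalar prefactors $\beta_k/D^k$ in one stroke, and then analyses $\mathcal{T}_{\mathcal{K}}(\rho)$ alone at the \emph{operator} level: decompose $\rho$ and $\pi_{\mathcal{K}}^{(k)}$ in the Schur operator bases, use the commutation $[\hat\Lambda,\hat\Pi]=0$ to extract the multiplicity factor, and apply Schur's Lemma in the form $\int U_k X U_k^{\dagger}\,dU=\Tr(X)\id/D^k_G$ on each irreducible block, which is then recognised as a unitary $1$-design. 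Your approach bypasses the cited factorisation entirely: by inserting the matrix-coefficient expansion from the outset and doing both Haar integrals via Schur orthogonality, you obtain the $\beta_k/D^k_C$ factor and the $\sum_{m_1,m_2}\hat\Pi^{m_1m_2}_k\rho\,\hat\Pi^{m_1m_2\dagger}_k$ structure in a single computation. This is more self-contained and arguably cleaner; the price is that the role of the non-compact part is somewhat less transparently isolated than in the paper's two-step scheme.

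The structural caveat you flag---that $\xi^{(k)}_{\mathcal{G}}|_{\mathcal{K}}$ must remain irreducible and that distinct $\mathcal{G}$-types give inequivalent $\mathcal{K}$-types---is the same hypothesis the paper uses implicitly when it invokes block-diagonality \eqref{block-diagonal-av} and the identity \eqref{intOverGsubs} for $\mathcal{K}$ rather than $\mathcal{G}$. Both proofs therefore rest on the same Weyl-unitary-trick input encoded in the ``regular representation'' hypothesis; you are simply more explicit about it.
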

\begin{proof}
Due to the main result of the work \cite{Markiewicz23}, 
in the case of a Cartan-based averaging over a reductive symmetry group $\mathcal G$, with the additional assumption of uniform averaging over the compact components, the averaging procedure factorizes into a uniform averaging of the input state over the compact component $\mathcal{K}$ of the group and a constant (input-state-independent) factor corresponding to the uniform averaging \textit{of} the noncompact component $\mathcal A$ of $\mathcal G$ \textit{over} $\mathcal K$. This is illustrated by the following formula:
\begin{equation}
    \label{genRedTwirl00}
\mathcal{T}_{\mathcal{G}}(\rho) =\mathcal{T}_{\mathcal{K}}(\rho)\mathcal{T}_{\mathcal{K}}\left(\int_{\mathcal A}\pi_{\mathcal G}(A_{\operatorname{n}})\pi_{\mathcal G}(A_{\operatorname{n}})^{\dagger}\operatorname{d}\!A\right).
\end{equation}
In the work \cite{Markiewicz23} it is shown that the second factor has a very peculiar form: it is in fact rescaled identity operator, with constant factors on each of the $k$-th invariant subspaces:
\begin{equation}
    \label{genRedTwirl0}
\mathcal{T}_{\mathcal{K}}\left(\int_{\mathcal A}\pi_{\mathcal G}(A_{\operatorname{n}})\pi_{\mathcal G}(A_{\operatorname{n}})^{\dagger}\operatorname{d}\!A\right)=\sum_{k\in\mathcal I} \frac{\beta_k}{D^k}\hat\Pi_k,
\end{equation}
in which the coefficients $\beta_k$ are specified by the formula \eqref{betafinG}. Hence we have:
\begin{equation}
    \label{genRedTwirl}
\mathcal{T}_{\mathcal{G}}(\rho) =\mathcal{T}_{\mathcal{K}}(\rho)\sum_{k\in\mathcal I} \frac{\beta_k}{D^k}\hat\Pi_k.
\end{equation}
The second step boils down to showing that for arbitrary compact group $\mathcal K$  the uniform averaging over its unitary representation always can be represented as a finite averaging over unitary basis elements taken for each irreducible subspace. This holds because of Schur's Lemma as well as because of the fact that the Haar measure is left invariant 
with respect to shifts by group elements. Let us show this equivalence of averaging in details.

Firstly, let us decompose the map $\mathcal{T}_{\mathcal{K}}$, representing uniform averaging over a compact group $\mathcal K$, into irreducible representations:

\begin{equation}
    \label{block-diagonal-av}
    \mathcal T_{\mathcal K}(\rho)=\int_{\mathcal K}\operatorname{d}\!g\,\pi_{\mathcal K}(g)\rho\,\pi_{\mathcal K}(g)^{\dagger}=\sum_{k,k'\in\mathcal{I}} \int_{\mathcal K}\operatorname{d}\!g\,\pi_{\mathcal K}^{(k)}(g)\rho\,\pi_{\mathcal K}^{(k')}(g)^{\dagger}=\sum_{k\in\mathcal{I}} \int_{\mathcal K}\operatorname{d}\!g\,\pi_{\mathcal K}^{(k)}(g)\rho\,\pi_{\mathcal K}^{(k)}(g)^{\dagger}.
\end{equation}
This decomposition is diagonal in the index $k\in\mathcal{I}$ denoting irreducible representations as a consequence of Schur's lemma. A proof of this fact for $\mathcal K$ being a unitary group can be found in \cite{Markiewicz21}, Lemma 9. However it holds for uniform averaging over any compact group, since in such a case the Haar measure is invariant with respect to shift by a group element, which is the only essential assumption needed for proving the lemma.
Further we decompose the representation operators $ \pi_{\mathcal K}^{(k)}(g)$ in Schur operator bases \eqref{PiBasis0}, according to formulas \eqref{Ut1}, and the density matrix $\rho$ in the full Schur operator basis \eqref{FullPiBasis}:
\begin{eqnarray}
     \pi_{\mathcal K}^{(k)}(g)&=&\sum_{m_1,m_2=1}^{D^k_G}K(g)^k_{m_1m_2}\hat\Pi^{m_1 m_2}_k,\nonumber\\
\rho&=&\rho^{ij}_{n_1\lambda_1n_2\lambda_2}\hat\Pi_{ij}^{n_1\lambda_1n_2\lambda_2},
\end{eqnarray}
in which in the second line we use summation convention. The coefficients in the above formulas read:
\begin{eqnarray}
  K(g)^k_{m_1m_2}&=&\frac{1}{D^k_C}\Tr\left(\pi_{\mathcal K}^{(k)}(g)\hat\Pi^{m_1 m_2\dagger}_k\right),\nonumber\\
\rho^{ij}_{n_1\lambda_1n_2\lambda_2}&=&\Tr\left(\rho\hat\Pi_{ij}^{n_1\lambda_1n_2\lambda_2\dagger}\right).
\end{eqnarray}
Now, as shown in \cite{Markiewicz21}, Schur operator bases follow the following block-orthogonality condition, which follows solely from orthogonality of Schur basis:
\begin{equation}
\hat\Pi_k^{m_1m_2}\hat\Pi_{ij}^{n_1\lambda_1n_2\lambda_2}\hat\Pi_k^{r_1r_2}=\hat\Pi_k^{m_1m_2}\hat\Pi_k^{n_1n_2}\hat\Lambda_k^{\lambda_1\lambda_2}\hat\Pi_k^{r_1r_2}.
\end{equation}
Utilizing the above, as well as the commutativity relation \eqref{PiCommute}, we can express the twirling operation \eqref{block-diagonal-av} as follows (note that we use summation convention in each step):
 \begin{eqnarray}
    \label{unitTwirl3}
    \mathcal T_{\mathcal{K}}(\rho)&=&\int \left(K(g)^k_{m_1m_2}\hat\Pi^{m_1m_2}_k\right)\left(\rho^{ij}_{n_1\lambda_1n_2\lambda_2}\hat\Pi_{ij}^{n_1\lambda_1n_2\lambda_2}\right)\left((K(g)^\dagger)^k_{r_1r_2}\hat\Pi^{r_1 r_2}_k\right) \operatorname{d}\!g\nonumber\\
    &=&\int \left(K(g)^k_{m_1m_2}\hat\Pi^{m_1m_2}_k\right)\left(\rho^{kk}_{n_1\lambda_1n_2\lambda_2}\hat\Pi_{k}^{n_1n_2}\hat\Lambda_{k}^{\lambda_1\lambda_2}\right)\left((K(g)^\dagger)^k_{r_1r_2}\hat\Pi^{r_1 r_2}_k\right) \operatorname{d}\!g\nonumber\\
&=&\rho^{kk}_{n_1\lambda_1n_2\lambda_2}\hat\Lambda_{k}^{\lambda_1\lambda_2}\int \left(K(g)^k_{m_1m_2}\hat\Pi^{m_1 m_2}_k\right)\hat\Pi_{k}^{n_1n_2}\left((K(g)^\dagger)^k_{r_1r_2}\hat\Pi^{r_1 r_2}_k\right) \operatorname{d}\!g.
\end{eqnarray}
Now due to Schur's Lemma the integral in the above formula is proportional to identity on the entire $k$-th subspace:
\begin{equation}
\label{intOverGsubs}
    \int \left(K(g)^k_{m_1m_2}\hat\Pi^{m_1 m_2}_k\right)\hat\Pi_{k}^{n_1n_2}\left((K(g)^\dagger)^k_{r_1r_2}\hat\Pi^{r_1 r_2}_k\right) \operatorname{d}\!g=\frac{1}{D^k_G}\hat \Pi_k\delta_{n_1n_2},
\end{equation}
in which the projector $\hat\Pi_k$ is naturally defined as:
\begin{equation}
    \label{iProjML}
     \hat\Pi_k=\sum_{m=1}^{D^k_G}\hat\Pi_{k}^{mm}=\sum_{\lambda=1}^{D^k_C}\hat\Lambda_{k}^{\lambda\lambda}.
\end{equation}
The proof of this fact for unitary group can be found in \cite{Markiewicz21}, Appendix A.3, and is based on the following property of averaging over irreducible representations (see Lemma 10 therein):
\begin{equation}
\label{Schur1Twirl2}
    \int U_kX U^\dagger_k\operatorname{d}\!U
=\frac{1}{D_{\mathcal H}^k}\operatorname{Tr}(X)\id_{\mathcal H_k},
\end{equation}
in which $U_k$ is an irreducible representation of the unitary group on some complex vector space $\mathcal H_k$ of dimension $D_{\mathcal H}^k$ and $X\in \mathcal H_k$ is an arbitrary linear operator on $\mathcal H_k$.
However, since the proof does not utilize any properties of the unitary group apart from the fact that the Haar measure is invariant with respect to shifts by group elements, it holds for any compact groups acting via regular representations.

Now the crucial step is the fact, that the averaging \eqref{Schur1Twirl2} can be always replaced by a unitary $1$-design build up from a unitary operator basis \cite{BZ17}. Indeed let us take $\{\gamma^{(k)}_l\}$ as a unitary operator basis on $\mathcal H_k$, then we have:
\begin{equation}
\label{design1}
    \int U_kX U^\dagger_k\operatorname{d}U
=\frac{1}{D_{\mathcal H}^k}\operatorname{Tr}(X)\id_{\mathcal H_k}=\frac{1}{\left(D^k_{\mathcal H}\right)^2}\sum_{l=1}^{\left(D^k_{\mathcal H}\right)^2}\gamma^{(k)}_lX\gamma^{(k)\dagger}_l.
\end{equation}
Applying the above property to the integral \eqref{intOverGsubs} and noting that $\Tr\left(\hat\Pi_{k}^{n_1n_2}\right)=\delta_{n_1n_2}D^k_C$, we obtain:
\begin{eqnarray}
\label{intOverGsubs2}
    &&\int \left(K(g)^k_{m_1m_2}\hat\Pi^{m_1 m_2}_k\right)\hat\Pi_{k}^{n_1n_2}\left((K(g)^\dagger)^k_{r_1r_2}\hat\Pi^{r_1 r_2}_k\right) \operatorname{d}\!g\nonumber\\
    &&=\frac{1}{\left(D^k_G\right)^2}\sum_{l=1}^{\left(D^k_G\right)^2} \left(\left(\gamma^{(k)}_l\right)_{m_1m_2}\hat\Pi^{m_1 m_2}_k\right)\hat\Pi_{k}^{n_1n_2}\left(\left(\gamma^{(k)\dagger}_l\right)_{r_1r_2}\hat\Pi^{r_1 r_2}_k\right).
\end{eqnarray}
Inserting the last expression  into the last row of \eqref{unitTwirl3} and performing the transformations in \eqref{unitTwirl3} in reverse order starting from the last row, but with summation instead of integration we obtain:
\begin{equation}
     \mathcal T_{\mathcal{K}}(\rho)=\frac{1}{\left(D^k_G\right)^2}\sum_{l=1}^{\left(D^k_G\right)^2} \left(\left(\gamma^{(k)}_l\right)_{m_1m_2}\hat\Pi^{m_1 m_2}_k\right)\rho\left(\left(\gamma^{(k)\dagger}_l\right)_{r_1r_2}\hat\Pi^{r_1 r_2}_k\right)=\frac{1}{\left(D^k_G\right)^2}\sum_{l=1}^{\left(D^k_G\right)^2} \tilde\gamma^{(k)}_l\rho\,\tilde\gamma^{(k)\dagger}_l,
\end{equation}
in which in the last equality we used definition \eqref{Gt0}. Inserting the last expression into \eqref{genRedTwirl} we obtain the final result \eqref{teor-fin-group-average}.

\end{proof}

\section{Examples}

\subsection{Finite averaging over compact symmetry group}
Let us take $\mathcal G$ to be the unitary group $\textrm{U(2)}$ and let us take its representation on $(\mathbb C^2)^{\otimes 4}$ defined as a collective action: $\pi_{\mathcal{G}}(U)=U^{\otimes 4}$. Then the Cartan decomposition is trivial, and the twirling channel with respect to $\pi_{\mathcal{G}}$ \eqref{def-twirl-gen} represents averaging of a $4$-qubit quantum state over collective action of local unitary transformations. Let us see how such averaging can be represented as a finite averaging using unitary operator bases. In this case a standard Schur-Weyl duality holds and the action of $\pi_{\mathcal{G}}(U)=U^{\otimes 4}$ on the representation space 
$(\mathbb C^2)^{\otimes 4}$ decomposes into three  subspaces \eqref{SchurBasis0} of dimensions respectively $5$, $9$ and $2$ (including multiplicity), containing irreducible representations of  $\textrm{U(2)}$. Let us denote these subspaces by the index $k=1,2,3$. The $5$-dimensional subspace $k=1$ is spanned by  fully symmetric states, and corresponds to a single $5$-dimensional spin-$2$ irreducible representation
of $\textrm{U(2)}$. Let us denote the corresponding Schur basis vectors as $\{\ket{1, m, 1}\}_{m=1}^5$. The $9$-dimensional $k=2$ subspace comprises three equivalent $3$-dimensional spin-$1$ irreducible representations of $\textrm{U(2)}$, let us denote its Schur basis vectors as $\{\ket{2, m, \lambda}\}_{m,\lambda=1}^3$. The $2$-dimensional $k=3$ subspace  corresponds to two $1$-dimensional irreducible representations of $\textrm{U(2)}$, with corresponding Schur basis vectors $\{\ket{3, 1, \lambda}\}_{\lambda=1}^2$. Explicit representation of the described Schur basis in terms of standard basis can be found in Appendix A. Using \eqref{PiBasis0} we find 3 sets of Schur operator bases: $\{\hat\Pi^{m_1 m_2}_1\}_{m_1,m_2=1}^5$, $\{\hat\Pi^{m_1 m_2}_2\}_{m_1,m_2=1}^3$ and a single operator $\hat\Pi^{11}_3$. In order to define finite averaging \eqref{teor-fin-group-average} we need  to define two sets of unitary operator bases \eqref{Gt0} acting on spaces $\mathcal M(5, \mathbb C)$ ($5\times5$-dimensional complex matrices) and $\mathcal M(3, \mathbb C)$, since averaging on $1$-dimensional space is trivial.
In order to do so we choose Heisenberg-Weyl unitary operator bases \cite{BZ17}, comprising operators:
\begin{eqnarray}
\{\gamma^{(1)}_l\}_{l=1}^{25}&=&\{\omega_5^{ij}Z_5^iX_5^j\}_{i,j=0}^4,\nonumber\\    
\{\gamma^{(2)}_l\}_{l=1}^{9}&=&\{\omega_3^{ij}Z_3^iX_3^j\}_{i,j=0}^2,
\end{eqnarray}
in which the complex roots of unity read: $\omega_d=e^{2i\pi/d}$, and the generators $X_5, Z_5, X_3, Z_3$ are unitary generalizations of Pauli $\sigma_x$ and $\sigma_z$ operators in respective dimensions. Having all ingredients, and noting that due to compactness of the unitary group all the coefficients $p_k$ in \eqref{teor-fin-group-average} equal to one,  we can now formulate the final form of the finite averaging:
\begin{equation}
\label{fin:U2}
    \mathcal T_{\mathcal G}(\rho)= \frac{1}{25}  \sum_{l=1}^{25}\tilde\gamma^{(1)}_l\rho\tilde\gamma^{(1)\dagger}_l+\frac{1}{9}  \sum_{l=1}^{9}\tilde\gamma^{(2)}_l\rho\tilde\gamma^{(2)\dagger}_l+\hat\Pi^{11}_3\rho\hat\Pi^{11\dagger}_3.
\end{equation}
In the above, the operators $\tilde\gamma^{(1)}_l$ and $\tilde\gamma^{(2)}_l$ are defined by \eqref{Gt0}, whereas in the last term instead of writing $\tilde\gamma^{(3)}$ we use simply $\hat\Pi^{11}_3$, since in this $1$-dimensional case $\gamma^{(3)}=1$.

\subsection{Finite averaging over non-compact symmetry group}

Now, let us take $\mathcal G$ to be the group $\operatorname{SL}(2,\mathbb C)$, and consider $\pi_{\mathcal G}(L)=L^{\otimes 4}$, $L \in \textrm{SL}(2,\mathbb C)$. In order to define averaging procedure corresponding to this choice of group and its representation in accordance with the Cartan-based integration \eqref{mainTwirlG0}, let us first introduce Cartan 'KAK' decomposition for $\operatorname{SL}(2,\mathbb C)$. It reads: $L=SAS'$, in which $S$ and $S'$ are special unitary $\operatorname{SU}(2)$ operators, and $A$ is a real diagonal matrix representing the non-compact component of the group. Due to the commutativity of taking Cartan decomposition and applying a representation \eqref{diag:comm}, we can without loss of generality write:
$\pi_{\mathcal G}(L=SAS')=(SAS')^{\otimes 4}$. Now, the averaging procedure \eqref{mainTwirlG0}
represents averaging a four qubit system over a collective action of SLOCC-type operations \cite{Markiewicz21, Markiewicz23}. In order to express it as a finite averaging we need two elements: (i) find Schur basis related with the compact group component  in order to define Schur operator basis $\{\hat\Pi_k^{m_1m_2}\}$, (ii) find coefficients $\beta_k$ \eqref{betafinG}, related with non-compact part of the group $\textrm{SL}(2,\mathbb C)$. The Schur basis for the action of the group $\mathcal S=\textrm{SU}(2)$ via the representation $\pi_{\mathcal S}=S^{\otimes 4}$ is exactly the same as in the previous example for the action of the entire unitary group $\pi_{\mathcal G}=U^{\otimes 4}.$ Regarding the coefficients $\beta_k$ we can find them in an example shown in \cite{Markiewicz23}. Indeed, the normalized non-compact Cartan component of the group  $\textrm{SL}(2,\mathbb C)$ reads:
\begin{equation}
\label{AnSL2C}
     A_{\operatorname{n}} =  \left( \begin{array}{cc} 1 & 0 \\ 0 & x^{-2}  \end{array}  \right),\,\, x \geq 1 .
\end{equation}
Assuming that the above filtering operation is drawn according to the following Laguerre-like normalised measure:
\begin{equation}
    \label{supmeasure}
    \operatorname{d}\!A=\frac{e^{-x}dx}{\int_1^{\infty}e^{-x}dx},
\end{equation}
we obtain by numerical integration the following coefficients $\beta_k$ \cite{Markiewicz23}:
\begin{eqnarray}
\beta_1&\approx&0.30036,\nonumber\\
\beta_2&\approx&0.14652,\nonumber\\
\beta_3&\approx&0.12290.
\end{eqnarray}
Therefore following \eqref{teor-fin-group-average} we obtain a finite averaging over the group $\textrm{SL}(2,\mathbb C)$ as a rescaling of the corresponding expression for averaging over $\operatorname{U}(2)$ \eqref{fin:U2}:
\begin{equation}
    \mathcal T_{\mathcal G}(\rho)= \frac{\beta_1}{125}  \sum_{l=1}^{25}\tilde\gamma^{(1)}_l\rho\tilde\gamma^{(1)\dagger}_l+\frac{\beta_2}{81}  \sum_{l=1}^{9}\tilde\gamma^{(2)}_l\rho\tilde\gamma^{(2)\dagger}_l+\frac{\beta_3}{2}\hat\Pi^{11}_3\rho\hat\Pi^{11\dagger}_3.
\end{equation}

\section{Comparison with unitary $t$-designs and SL-$t$-designs}

Let us now focus on the case of collective action of symmetry groups on multiparticle quantum states specified by $\pi_{\textrm{U}}(U)=U^{\otimes t}$ for the case of unitary group $U\in\textrm{U}(d)$ and $\pi_{\textrm{SL}}(L)=L^{\otimes t}$ for the case of special linear group $L\in \textrm{SL}(d,\mathbb C)$. For such cases natural averaging sets have already been defined as unitary $t$-designs \cite{Dankert05} $\{U_i\}_{i\in\mathcal X_{\textrm{U}}}$ and only recently \cite{Markiewicz21} product  SL-$t$-designs of the form $\{L_i\}_{i=\{\alpha,\beta,\gamma\}\in\mathcal X_{\textrm{SL}}}$ with $L_i=U_{\alpha}A_{\beta}U_{\gamma}$. In the last construction $\{U_{\alpha}\}$ and $\{U_{\gamma}\}$ denote unitary $t$-designs, whereas $\{A_{\beta}\}$ denotes finite averaging set on the non-compact part of the group. In both cases the finite averaging is local with respect to the original physical subsystems:
\begin{eqnarray}
\label{localdesigns}
    \mathcal T_{\textrm{U}}(\rho)&=&\frac{1}{|\mathcal X_{\textrm{U}}|}\sum_{i\in\mathcal X_{\textrm{U}}}U_i^{\otimes t}\rho \,U_i^{\otimes t\dagger},\nonumber\\
    \mathcal T_{\textrm{SL}}(\rho)&=&\frac{1}{|\mathcal X_{\textrm{SL}}|}\sum_{i=\{\alpha,\beta,\gamma\}\in\mathcal X_{\textrm{SL}}}(U_{\alpha}A_{\beta}U_{\gamma})^{\otimes t}\rho \,(U_{\alpha}A_{\beta}U_{\gamma})^{\otimes t\dagger}.
\end{eqnarray}
On the other hand, our scheme of finite unitary averaging \eqref{teor-fin-group-average} is based on operations \eqref{teor-fin-group-average_mixed}, which are  not of the form of a tensor product of local operators. To see this  fact let us decompose both sampled operations comprising the finite averaging sets, $U_i^{\otimes t}$ and $ \tilde\gamma^{(k)}_l$ in the Schur operator basis \eqref{PiBasis0}:
\begin{eqnarray}
    \label{Ut1F}
   U_i^{\otimes t}&=&\sum_{k\in\mathcal I}\frac{1}{D^k_C}\sum_{m_1,m_2=1}^{D^k_G}\operatorname{Tr}\left( U_i^{\otimes t}\hat\Pi^{m_1 m_2\dagger}_k\right)\hat\Pi^{m_1 m_2}_k,\nonumber\\
    \tilde\gamma^{(k)}_l&=&\sum_{m_1,m_2=1}^{D^k_G}{\left(\gamma^{(k)}_l\right)}_{m_1m_2}\hat\Pi^{m_1 m_2}_k.
\end{eqnarray}
From the above decomposition it is clear that operators of the tensor product form translate into a direct sum of operators acting on each of the invariant subspaces carrying the irreducible representations of $\textrm{U}(d)$. Therefore, operators defined solely within particular subspaces, as is the case of $\tilde\gamma^{(k)}_l$, cannot be represented as tensor products of local operators: $\tilde\gamma^{(k)}_l\neq V^{\otimes t}$. This is because $V^{\otimes t}$, for any invertible $d\times d$ matrix $V$ decomposes in the Schur operator basis into subspaces corresponding to irreducible representations of $\textrm{GL}(d,\mathbb C)$, and due to their tensorial construction (see e.g. \cite{Tung}, chapter 13) they cannot be empty (consist of matrices of zeros) on any of these subspaces.

On the other hand, our construction is competitive when it comes to the size of finite averaging sets. Indeed, let us follow the notation of \cite{Roy09} and denote the dimension of the operator space spanned by  
tensor powers of unitary operators $U^{\otimes r}\otimes\bar{U}^{\otimes s}$, in which $U\in\textrm{U}(d)$ as $D(d, r, s)$. In the case of our construction the number of elements of a finite unitary averaging set is always equal to $D(d, t, 0)$, since it consists of a specific operator bases defined on all irreducible subspaces of the action of $U^{\otimes t}$. As proven in \cite{Roy09}, this number equals to:
\begin{equation}
    \label{Dto}
    D(d, t, 0)=\binom{d^2+t-1}{t}.
\end{equation}
At the same time, the \textit{lower bound} on the size of a unitary $t$-design, as proved in \cite{Roy09}, is specified by $D(d, \lceil t/2\rceil, \lfloor t/2\rfloor)$. Now, it happens that 
$D(d, t, 0)\leq D(d, \lceil t/2\rceil, \lfloor t/2\rfloor)$, and the equality appears only for $d=2$, namely for the case of $t$-qubit systems, see \cite{Roy09}, discussion after Theorem 8. Comparison of exemplary sizes of our finite averaging sets, lower bounds on sizes of the corresponding $t$-designs and sizes of known $t$-designs can be found in the Table \ref{tab:comp}. Our proposed unitary finite averaging sets are always at most as large as the lower bound on the size of unitary designs, however, for $d>2$ they always contain a smaller number of elements. For $t=2$ (averaging over $U^{\otimes 2}$) we can provide an exact scaling of these two approaches.
Namely, in the case of our approach we have the size of the averaging set equal to $D(d,2,0)=\binom{d^2+1}{2}=\frac{1}{2}d^4+\frac{1}{2}d^2$. In the case of unitary designs, the lower bound for the size of the unitary $2$-designs is known to be exactly $D(d,1,1)=d^4-2d^2+2$ \cite{Gross07}. As can be easily checked, for $d\geq 3$ we have:
$D(d,2,0)<D(d,1,1)$, and  asymptotically our construction contains half the number of elements of the lower bound for a size of a unitary $2$-design:
\begin{equation}
    \lim_{d\rightarrow\infty}\frac{D(d,2,0)}{D(d,1,1)}=\frac{\frac{1}{2}d^4+\frac{1}{2}d^2}{d^4-2d^2+2}=\frac{1}{2},
\end{equation}
which can be illustrated by comparing third and fourth columns of Table \ref{tab:comp}. 

In the case of SL-$t$-designs the discrepancy of the sizes of designs and our unitary finite averaging sets is even higher. For the latter, the number of elements is the same as in the case of averaging over tensor power of the unitary group, whereas  SL-$t$-designs are at least quadratically larger than the  unitary $t$-designs of the same dimensions.

\begin{center}
\begin{table}[]
\resizebox{\textwidth}{!}{
    \centering
    \begin{tabular}{||c | c | c | c | c | c ||} 
 \hline
 d & t & Univ. fin. set for $U^{\otimes t}$ and $L^{\otimes t}$ & Abs. bound for $U^{\otimes t}$ & Known design for $U^{\otimes t}$ & Known design for $L^{\otimes t}$\\ 
 \hline\hline
 2 & 2 & 10 & 10 & 12 & 1296\\ 
 \hline
 2 & 3 & 20 & 20 & 24 & 6336\\ 
 \hline
  2 & 5 & 56 & 56 & 60 & 54000\\ 
 \hline
  3 & 2 & 45 & 65 & 72 & ?\\ 
 \hline
  3 & 3 & 165 & 270 & 360 & ?\\ 
 \hline
  5 & 2 & 325 & 577 & 600 & ?\\ 
 \hline
  6 & 2 & 666 & 1226 & 2520 & ?\\ 
 \hline
  7 & 2 & 1225 & 2305 & 2352 & ?\\ 
 \hline
  8 & 2 & 2080 & 3970 & 20160 & ?\\ 
 \hline
  9 & 2 & 3321 & 6401 & 12960 & ?\\ 
 \hline
  10 & 2 & 5050 & 9802 & 95040 & ?\\ 
 \hline
\end{tabular}}
    \caption{Comparison of the sizes of finite averaging sets for averaging quantum states over tensor powers of unitary $U^{\otimes t}$ and special linear $L^{\otimes t}$ matrices for $U\in\textrm{U}(d)$ and $L\in\textrm{SL}(d,\mathbb C)$. The third column contains the sizes of the universal unitary finite averaging sets proposed in this work (they work for both, averaging over $U^{\otimes t}$ and  $L^{\otimes t}$), equal to $\binom{d^2+t-1}{t}$, see the main text. The fourth column contains the lower bound for the sizes of the corresponding unitary $t$-designs, equal to the dimension of the operator space $U^{\otimes \lceil t/2\rceil}\otimes\bar{U}^{\otimes\lfloor t/2\rfloor}$, see \cite{Roy09}. The fifth column contains sizes of the corresponding \textit{known} $t$-designs (after \cite{Roy09}), whereas the last column contains sizes of the so-called SL-$t$-designs (when known), as proposed in \cite{Markiewicz21}.}
    \label{tab:comp}
\end{table}
\end{center}

\section{Conclusions and Discussion}

In this work we have presented a construction of finite averaging sets for averaging finite dimensional quantum states over arbitrary representations of arbitrary matrix Lie groups. In the case of non-compact group the averaging procedure  is defined via iterated integration over Cartan-decomposed group, and the averaging is assumed to be uniform over the compact components and arbitrary over the non-compact ones. The proposed finite averaging sets consist of probabilistic mixtures of unitary $1$-designs defined on subspaces corresponding to irreducible representations of the group under consideration. The advantages of the construction are its universality and lower size of the averaging sets in the case of collective averaging when compared to respective $t$-designs. On the other hand, the main disadvantage is the fact that the finite averaging sets consist of operators which are not of the form of the representation matrices acting on the state, but of the form of operators acting on irreducible subspaces corresponding to these representations. Nevertheless, in the case of the most important applications, like averaging over tensor powers of the unitary operations $U^{\otimes t}$ and mixed unitary operations $U^{\otimes r}\otimes \bar{U}^{\otimes s}$, proposed construction can be efficiently implemented using several quantum circuit implementations of the so-called Quantum Schur Transforms.

Although the proposed procedure is quite general, it has two important constraints: we assume uniform averaging over the compact components of the symmetry group and we discuss only the case of finite dimensional representations. Natural generalization should go towards relaxing both constraints. 
In this way, one should try to construct universal finite unitary averaging with respect to arbitrary measures on the group elements. On the other hand, one may try to construct finite unitary averaging for physical systems described by infinite-dimensional state spaces, like relativistic quantum particles \cite{Schlichtholz24} or bosonic systems \cite{Serafini07}, using either the standard approach of this work on discrete subspaces corresponding to spin or helicity, by utilizing generalized quadratures or by following the direction of research in recently introduced \textit{rigged $t$-designs} for continuous variable systems \cite{rigged1, rigged2}.

\section*{Acknowledgements}
\noindent MM acknowledges support from the National Science Centre (NCN), Poland, under
Project Opus No. 2024/53/B/ST2/02026. 

\noindent This work is partially supported by the IRAP/MAB programme, project no. FENG.02.01-IP.05-0006/23, financed by the MAB FENG program 2021-2027, Priority FENG.02, Measure FENG.02.01., with the support of the FNP (Foundation for Polish Science).

\appendix
\section{Calculating Schur basis for the action of $\textrm{U}(2)$ on $ (\mathbb C^{2})^{\otimes 4}$}

In order to find Schur basis for the action of $\textrm{U}(2)$ on $ (\mathbb C^{2})^{\otimes 4}$ via $\pi_{\textrm{U}(2)}(U)=U^{\otimes 4}$ one can utilize the formalism of Young symmetrizers, see e.g. \cite{Tung} (chapter 5.5), or Appendix A of \cite{Markiewicz23}.
The three irreducible representations of $\textrm{U}(2)$, respectively spin-$2$ representation, spin-$1$ representation and spin-$0$ representation, that are realized on $ (\mathbb C^{2})^{\otimes 4}$ can be constructed using Young symmetrizers related with the following three Young diagrams:
\begin{center}
    \begin{ytableau}
      \none   &  &  &  &
\end{ytableau}
\begin{ytableau}
      \none   &  &  &  \\
      \none   &
\end{ytableau}
\begin{ytableau}
      \none   &  &  \\
      \none   &  &
\end{ytableau}
\end{center}

\noindent For the fully symmetric $5$-dimensional subspace $\{\ket{1,m,1}\}_{m=1}^5$ carrying the spin-$2$ representation (first Young diagram) one finds the following basis states:
\begin{eqnarray}
\ket{1,1,1}&=&\ket{0000},\nonumber\\
    \ket{1,2,1}&=&\frac{1}{2}\left(\ket{1000}+\ket{0100}+\ket{0010}+\ket{0001}\right),\nonumber\\
    \ket{1,3,1}&=&\frac{1}{\sqrt{6}}\left(\ket{1100}+\ket{1010}+\ket{1001}+\ket{0110}+\ket{0101}+\ket{0011}\right),\nonumber\\
    \ket{1,4,1}&=&\frac{1}{2}\left(\ket{1110}+\ket{1101}+\ket{1011}+\ket{0111}\right),\nonumber\\
    \ket{1,5,1}&=&
\ket{1111}.\end{eqnarray}
For the $9$-dimensional subspace $\{\ket{2,m,\lambda}\}_{m,\lambda=1}^3$ carrying three equivalent spin-$1$ representations (the middle Young diagram) one gets the following non-orthogonalized basis elements:
\begin{eqnarray}
    e_{2,1,1}&=&\frac{1}{\sqrt{6}}(2\ket{0001}-\ket{1000}-\ket{0010}),\nonumber\\
    e_{2,1,2}&=&\frac{1}{\sqrt{6}}(2\ket{0010}-\ket{1000}-\ket{0001}),\nonumber\\
    e_{2,1,3}&=&\frac{1}{\sqrt{6}}(2\ket{0100}-\ket{1000}-\ket{0001}),\nonumber\\
    e_{2,2,1}&=&\frac{1}{\sqrt{6}}(2\ket{1110}-\ket{0111}-\ket{1101}),\nonumber\\
    e_{2,2,2}&=&\frac{1}{\sqrt{6}}(2\ket{1101}-\ket{0111}-\ket{1110}),\nonumber\\
    e_{2,2,3}&=&\frac{1}{\sqrt{6}}(2\ket{1011}-\ket{0111}-\ket{1110}),\nonumber\\
     e_{2,3,1}&=&\frac{1}{\sqrt{12}}(2\ket{0101}-2\ket{1010}+\ket{0011}-\ket{1100}+\ket{1001}-\ket{0110}),\nonumber\\
    e_{2,3,2}&=&\frac{1}{\sqrt{12}}(2\ket{0110}-2\ket{1001}+\ket{0011}-\ket{1100}+\ket{1010}-\ket{0101},\nonumber\\
    e_{2,3,3}&=&\frac{1}{\sqrt{12}}(2\ket{0110}-2\ket{1001}+\ket{0101}-\ket{1010}+\ket{1100}-\ket{0011}).
\end{eqnarray}
In order to obtain Schur basis vectors we need to orthogonalize them within subspaces $C^2_m$ for $m=1,2,3$ (see column subspaces in \eqref{SchurBasis0}):
\begin{equation}
    \{\ket{2,m,1}, \ket{2,m,2}, \ket{2,m,3}\}_{m=1}^3=\{\operatorname{Orthogonalize}(e_{2,m,1}, e_{2,m,2}, e_{2,m,3})\}_{m=1}^3.
\end{equation}
Finally, for the $2$-dimensional subspace $\{\ket{3,1,\lambda}\}_{\lambda=1,2}$ carrying two equivalent spin-$0$ representations (the last Young diagram), we get the following non-orthogonalized vectors:
\begin{eqnarray}
     f_{3,1,1}&=&\frac{1}{2}(\ket{1100}+\ket{0011}-\ket{1001}-\ket{0110}),\nonumber\\
      f_{3,1,2}&=&\frac{1}{2}(\ket{1010}+\ket{0101}-\ket{1001}-\ket{0110}).\nonumber\\
\end{eqnarray}
In order to obtain Schur basis vectors we also need to orthogonalize them within the $C^3_1$ subspace:
\begin{equation}
    \{\ket{3,1,1}, \ket{3,1,2}\}=\{\operatorname{Orthogonalize}(f_{3,1,1}, f_{3,1,2})\}.
\end{equation}

\bibliographystyle{quantum}

\end{document}